\def\k{\tau}
\def\R{\mathbb{R}}
\title{One-step quantum search algorithms based on smooth operators}
\author[a]{Basanta R. Pahari}
\author[b]{Sagar Bhat}
\author[c]{Siri Davidi}
\author[a]{William Oates}
\affil[a]{Department of Mechanical Engineering, Florida Center for Advanced Aero Propulsion, Florida A\&M University and Florida State University, Tallahassee, FL 32310}
\affil[b]{James S. Rickards High School, Tallahassee, FL 32301}
\affil[c]{Lincoln High School, Tallahassee, FL 32311}
\begin{document} 
\maketitle

\begin{abstract}
The discovery of derivatives and integrals was a tremendous leap in scientific knowledge and completely revolutionized many fields, including mathematics, physics, and engineering. The existence of higher-order derivatives means better approximation and, thus, more accurate modeling of any physical phenomenon. Here we use smooth operators that are infinitely differentiable to construct two quantum search algorithms and connect these seemingly different areas. Along with smooth functions,  permutation operators and the roots of unity are exploited to create quantum circuits to perform a quantum search. We validate our models through quantum simulators and test them on IBM's quantum hardware. Furthermore, we investigate the effect of noise and error propagation and demonstrate that our approach is more robust to noise compared to iterative methods like Grover's algorithm.
\end{abstract}
\section{introduction and Motivations}
Orthogonal projection is a useful technique for applications such as pattern recognition and denoising, as it breaks down signals or functions into simpler components, making them easier to analyze~\cite{behrens1994signal,goldstein1998multistage,goldfarb1984unified}. The most common way to create projection operators on functional spaces is by multiplying functions with the characteristic or indicator function associated with the projection domain. However, this method has a significant drawback in that characteristic functions are discontinuous and cannot accurately approximate functions close to the boundary, nor have defined derivatives at the edges. To overcome this, smooth projection operators with higher-order derivatives are desirable. 

An orthogonal projection operator $P$ must be self-adjoint and idempotent, meaning $P^2=P.$ If we take the projection operator $P$ to be $Pf=Gf$ for some function $G$, then
$$P^2f=Pf\iff G^2-G=0\iff G(G-1)=0.$$

Hence G must be 1 or 0 almost everywhere. Any characteristic function satisfies this condition, and therefore, characteristic functions are often used to construct projection operators on functional spaces. To derive a smooth version of this process, some of the authors presented a method using the cyclic permutation operators on $L^2(\mathbb{R}^N)$, the space of square-integrable functions on $\mathbb{R}^N,$\cite{B2019,Pahari2020Dis} based on ideas derived from wavelet theory~\cite{Auscher,H1996}. Here we provide a summary of the process for completion.

To generate orthogonal projections into subspaces of $L^2(\R^N)$ associated with some cone shape partitions of $\R^N$,  let $N \in \mathbb{N}$ and $T$ be the cyclic coordinate permutation operator on $\R^N$. Let $x\in \R^N$,  $s_1\in C^{\infty}(\R^N)$ and  $s_j=s_1(T^{j-1}(\cdot))$ for $j=1,2,...N$ such that $$\sum_{j=1}^{N} |s_j|^2=1.$$

For $f\in L^2(\R^N)$, define $V_j$ as
\begin{eqnarray*}
	V_j f=\frac{1}{\sqrt{N}}\overline{s_j}\sum_{\k=0}^{N-1}w^{\k j}f(T^{\k }(\cdot)),
\end{eqnarray*}
where $w=e^{\frac{2\pi i}{N}}$ is the $N^{th}$ root of unity. Then, the operator $P$ defined as below is an orthogonal projection\cite{B2019,Pahari2020Dis}.

\begin{theorem} \label{thm1}

	\begin{eqnarray*}
		P_j f(x) =V_jV_j^\dagger f(x) =\overline{s_j}\sum_{\k =0}^{N-1}\overline{w^{\k j}}s_j(T^{\k }(x))f(T^{\k }(x))
	\end{eqnarray*}
	is an orthogonal projection on $L^2(\R^N)$. 
	If
	for each $\k \in \{0, 1,2, \dots, N-1\}$, 
	\begin{equation}   \label{eq.sjsi}
	\sum_{j=1}^N w^{\k j} \overline{s_j}(x) {s_{j+\k }}(x) = \delta_{\k,0} \, ,
	\end{equation}
          we also have that
	for any $f \in L^2(\R^N)$
	$$f=\sum_{j=1}^{N}P_j f.$$
	\end{theorem}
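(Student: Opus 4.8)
The statement that $P_j=V_jV_j^\dagger$ is an orthogonal projection I would take from the cited references; still, I would reconstruct it as follows. Self-adjointness is automatic from the form $V_jV_j^\dagger$, so only idempotency is at issue. First compute the adjoint: since a coordinate permutation $T$ preserves Lebesgue measure, the composition operator $C\colon f\mapsto f\circ T$ is unitary on $L^2(\R^N)$ with $C^\dagger=C^{-1}$, and a change of variables in $\langle V_jf,g\rangle$ gives
$$ (V_j^\dagger g)(x)=\frac{1}{\sqrt N}\sum_{\mu=0}^{N-1} w^{\mu j}\, s_j(T^{\mu}(x))\, g(T^{\mu}(x)). $$
A single reindexing together with the identity $\sum_{\mu=0}^{N-1}|s_j(T^{\mu}(x))|^2=\sum_{k=1}^{N}|s_k(x)|^2=1$ then shows that $V_j^\dagger V_j$ is the orthogonal projection onto the eigenspace $\{g : g\circ T=w^{-j}g\}$ of $C$; in particular it is idempotent, so $V_j$ is a partial isometry and $P_j=V_jV_j^\dagger$ is an orthogonal projection. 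Note that $\sum_j|s_j|^2=1$ also forces $|s_j|\le 1$ pointwise, so every operator here is bounded and no domain question arises.

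For the resolution of the identity the argument is purely algebraic. Substituting the explicit kernel of $P_j$ and interchanging the two finite sums,
$$ \sum_{j=1}^{N}P_j f(x)=\sum_{\k=0}^{N-1}\Bigl(\,\sum_{j=1}^{N} w^{\k j}\,\overline{s_j(x)}\,s_j(T^{\k}(x))\Bigr)f(T^{\k}(x)). $$
The decisive step is the covariance relation $s_j(T^{\k}(x))=s_{j+\k}(x)$, which is immediate from $s_j=s_1\circ T^{j-1}$ and $T^{N}=\mathrm{id}$ once the subscript $j+\k$ is read cyclically modulo $N$ — exactly as it is read in~(\ref{eq.sjsi}). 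With it, the coefficient of $f(T^{\k}(x))$ becomes $\sum_{j=1}^{N} w^{\k j}\,\overline{s_j(x)}\,s_{j+\k}(x)$, which is precisely the left-hand side of~(\ref{eq.sjsi}) and hence equals $\delta_{\k,0}$. Therefore every term with $\k\neq 0$ drops out and only $\k=0$ survives, contributing $\bigl(\sum_{j=1}^{N}|s_j(x)|^2\bigr)f(x)=f(x)$ by the standing normalization. Thus $f=\sum_{j=1}^{N}P_j f$.

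The main obstacle is bookkeeping rather than analysis. One must keep the phase and complex-conjugation conventions consistent across the definition of $V_j$, the kernel of $P_j$, the choice $w=e^{2\pi i/N}$, and~(\ref{eq.sjsi}), and one must reduce all subscripts $j+\k$ modulo $N$ so that the covariance relation and the hypothesis refer to the same functions. It is also worth noting that~(\ref{eq.sjsi}) at $\k=0$ is nothing but the normalization $\sum_j|s_j|^2=1$; the content of the hypothesis lies in the off-diagonal cases $\k\neq0$, and these are exactly what is needed to annihilate the cross terms $f(T^{\k}(\cdot))$ with $\k\neq0$ in $\sum_j P_j f$. Since the whole computation is a pointwise identity among finitely many smooth functions, there are no convergence or measurability issues.
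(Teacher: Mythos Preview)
The paper does not actually prove this theorem; it is quoted from the cited references as background material, so there is no in-paper argument to compare your attempt against. Your reconstruction is correct and is exactly the expected one: show that $V_j$ is a partial isometry by identifying $V_j^\dagger V_j$ with the spectral projection of the cyclic composition operator (using $\sum_{\mu}|s_j(T^{\mu}x)|^2=\sum_k|s_k(x)|^2=1$), and then collapse $\sum_j P_j f$ to $f$ via the covariance $s_j\circ T^{\tau}=s_{j+\tau}$ together with hypothesis~(\ref{eq.sjsi}). One cosmetic point worth flagging: the displayed kernel of $P_j$ in the statement carries $\overline{w^{\tau j}}$, whereas both a direct computation of $V_jV_j^\dagger$ from the stated $V_j$ and the form of~(\ref{eq.sjsi}) produce $w^{\tau j}$; you tacitly worked with the latter, which is the internally consistent convention, so your bookkeeping is fine.
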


The constant functions $s_j= w\frac{1}{\sqrt N}$, for every $j$, and $s_j= \frac{1}{\sqrt N}$, for every $j$, satisfy equation \eqref{eq.sjsi}.{\cite{B2019,Pahari2020Dis}} A non-smooth solution can be obtained by using indicator functions. If  $U_j$ are disjoint subsets of $\R^N$ such that $\chi_{U_{j+1}}=\chi_{U_j}(T(\cdot))$ with $\bigcup_{\k =}^N U_{\k }=R^N$, then $s_j=\chi_{U_{j}}$ is a solution of \eqref{eq.sjsi}. The authors also remark that there is a relationship between the solutions of \eqref{eq.sjsi} and the solution of the filter equations associated with M-band wavelets \cite{LXQH}. 
\begin{figure}[ht]
    \centering
    \includegraphics[scale=0.9]{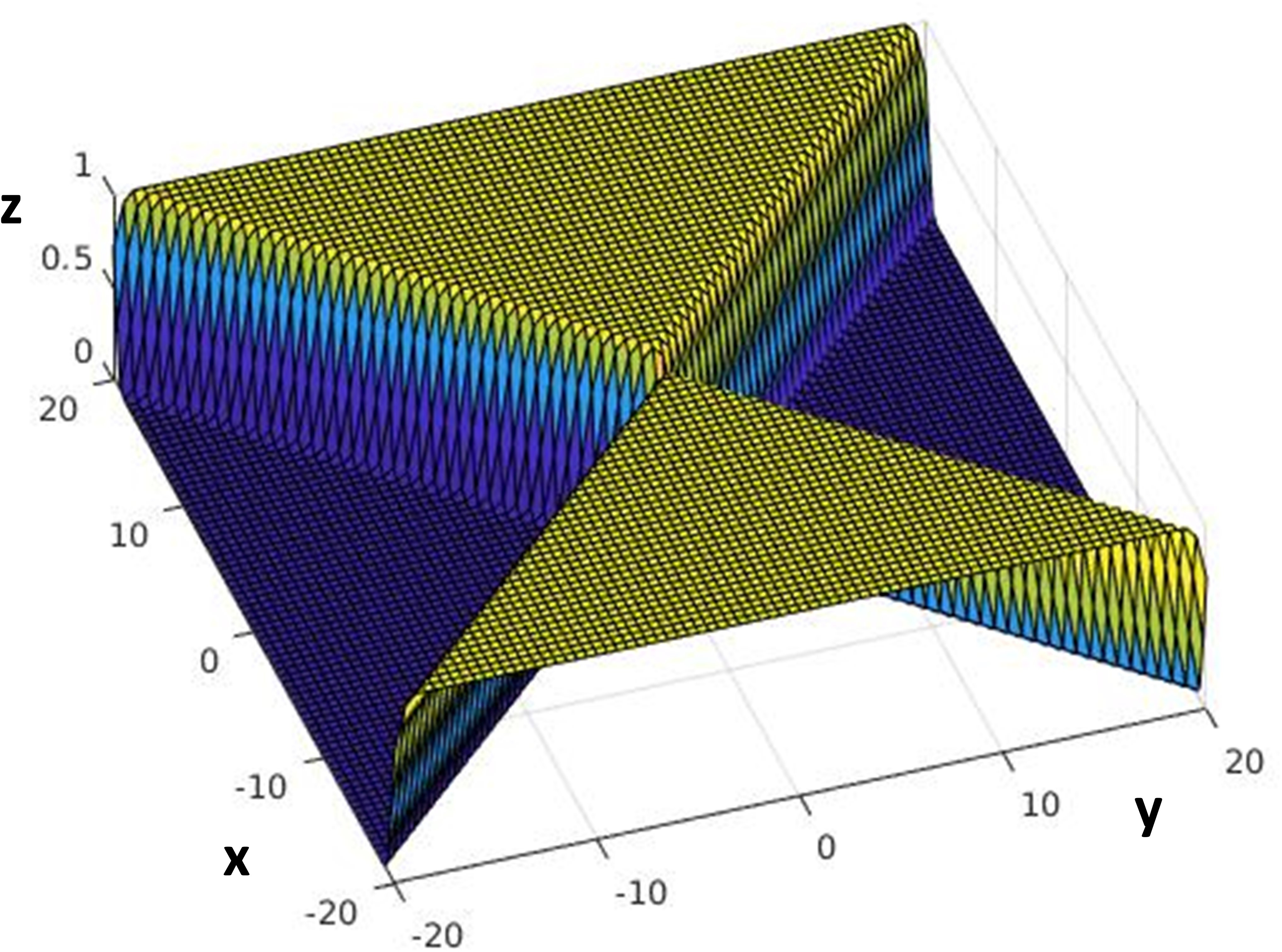}
    \caption{Smooth function $s$ on 2D~\cite{LXQH}. This function is a smooth version of the indicator function of a cone shape domian of $\mathbb{R}^2.$}
    \label{img:smooth}
\end{figure}

In Section~\ref{sec:algo}, we present our quantum algorithm, built upon the process of constructing smooth orthogonal projections described earlier. Remarkably, we find that this process is highly compatible with quantum computing. In Theorem~\ref{thm1}, the projection operator $P$ was created using an explicit symmetrization technique that involves the cyclic permutation operator $T$ and the $N^{th}$ root of unity $w$. Both $T$ and $w$ are cyclic, and the latter lies on the Bloch sphere with a norm of one, which makes them particularly useful for quantum computing. Moreover, by modifying the operators $P$ and $T$, we can transform them into quantum gates that produce unitary operators. Indeed, we have the following result:

\begin{theorem}\label{thm:unitary}
Let $N \in \mathbb{N}$, $T$ be the cyclic coordinate permutation operator on $\R^N,$  $s_1\in C^{\infty}(\R^N)$ and  $s_j=s_1(T^{j-1}(\cdot))$ for $j=1,2,...N$ such that $$\sum_{j=1}^{N} |s_j|^2=1.$$
Assume for each $\k \in \{0, 1,2, \dots, N-1\}$, 
	$$   
	\sum_{j=1}^N w^{\k j} \overline{s_j}(x) {s_{j+\k }}(x) = \delta_{\k,0} \,$$ 
 where $\delta_{\k,0}$ is the Kronecker delta. Then for $j,k=1,2,..,N$ the matrix $U$ defined as $$U_{jk}=\overline{s}_jw^{jk},$$
  is  unitary.  
\end{theorem}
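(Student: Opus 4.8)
The plan is to prove the pointwise matrix identity $U(x)U(x)^{\dagger}=I_N$ for every fixed $x\in\R^N$; since $U(x)$ is a square $N\times N$ matrix (its entries only involve $s_1$ evaluated along the finite $T$-orbit of $x$, because $T^N=\mathrm{id}$), this already forces $U^{\dagger}U=I_N$ and hence unitarity. Equivalently, one shows that the rows of $U$ — equivalently its columns — form an orthonormal basis of $\mathbb{C}^N$; I will phrase the argument through $UU^{\dagger}$.

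The argument is a direct computation. Expanding $(UU^{\dagger})_{jk}=\sum_{\ell=1}^{N}U_{j\ell}\,\overline{U_{k\ell}}$ and inserting the definition, the root-of-unity weights $w^{j\ell}$ and $\overline{w^{k\ell}}$ fuse into a factor $w^{(j-k)\ell}$, while the $\overline{s}\,s$ factors fuse into a running product of the shape $\overline{s_a}\,s_{a+\tau}$ with $\tau\equiv j-k\pmod N$. The heart of the proof is then a change of summation index: reading all subscripts and exponents modulo $N$ (legitimate because $T^N=\mathrm{id}$) and using the orbit relation $s_j=s_1(T^{j-1}(\cdot))$ to re-express the products $\overline{s_a}s_{a+\tau}$, the double sum collapses, up to a harmless overall phase of the form $w^{c\tau}$, onto the left-hand side of the hypothesis $\sum_{j=1}^{N}w^{\tau j}\,\overline{s_j}(x)\,s_{j+\tau}(x)$. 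By assumption this equals $\delta_{\tau,0}$, and the stray phase $w^{c\tau}$ is irrelevant since it is multiplied by $\delta_{\tau,0}$. On the diagonal $j=k$ (so $\tau=0$) the surviving sum is exactly $\sum_{j=1}^{N}|s_j(x)|^2=1$. Hence $(UU^{\dagger})_{jk}=\delta_{jk}$. Besides the hypothesis and this normalization, the only elementary fact invoked is the geometric-sum identity $\sum_{\ell=0}^{N-1}w^{m\ell}=N$ when $N\mid m$ and $0$ otherwise.

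I expect the only real difficulty to be bookkeeping rather than anything analytic: aligning the cyclic structure of the permutation operator $T$ with the cyclic structure of the $N$-th roots of unity. One must pin down the index conventions — in particular at which orbit point $T^{\ell-1}x$ each $s_j$ in $U_{jk}$ is to be evaluated, whether indices run over $\{1,\dots,N\}$ or $\{0,\dots,N-1\}$, and how the shift $j\mapsto j+\tau$ propagates through every reindexing — and must keep $w$ and $\overline{w}=w^{-1}$ consistent between the matrix entries and the hypothesis, since a sign slip there destroys the cancellation. Once the conventions are fixed so that the product $\overline{s_a}\,s_{a+\tau}$ appearing in $(UU^{\dagger})_{jk}$ literally matches the summand in the hypothesis, the theorem is an immediate consequence of that assumption together with $\sum_j|s_j|^2=1$; conceptually, the hypothesis is precisely the statement that the rows (equivalently columns) of $U$ are orthonormal.
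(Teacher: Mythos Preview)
Your plan contains a bookkeeping error that is more than cosmetic. With $U_{jk}=\overline{s}_j\,w^{jk}$ the $s$-factor depends only on the \emph{row} index $j$, not on the column index. Hence in
\[
(UU^{\dagger})_{jk}=\sum_{\ell=1}^{N}U_{j\ell}\,\overline{U_{k\ell}}
=\sum_{\ell=1}^{N}\overline{s}_j\,w^{j\ell}\,s_k\,w^{-k\ell}
=\overline{s}_j\,s_k\sum_{\ell=1}^{N}w^{(j-k)\ell},
\]
the factors $\overline{s}_j$ and $s_k$ are constant in $\ell$ and pull out entirely. The remaining geometric sum gives $N\,\delta_{jk}$, so $(UU^{\dagger})_{jk}=N\,|s_j|^2\,\delta_{jk}$: diagonal, but equal to the identity only when every $|s_j(x)|^2=1/N$. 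The ``running product $\overline{s_a}\,s_{a+\tau}$'' you describe, and the reindexing that is supposed to collapse the sum onto the hypothesis $\sum_j w^{\tau j}\overline{s_j}s_{j+\tau}=\delta_{\tau,0}$, never materialize in the $UU^{\dagger}$ computation, because the summation index $\ell$ is the column index and the $s$'s do not see it.

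The paper avoids this by computing $U^{\dagger}U$ instead. There the summation runs over the row index $l$, which is precisely the index that $s_l$ carries, and one obtains
\[
(U^{\dagger}U)_{jk}=\sum_{l=1}^{N}\overline{s_l}\,s_l\,w^{(k-j)l},
\]
which the paper then identifies with $\delta_{k-j,0}$. So the fix is simply to swap the order of the product you expand; once you compute $U^{\dagger}U$ rather than $UU^{\dagger}$, your outline matches the paper's argument essentially verbatim. As written, though, the mechanism you rely on does not occur, and the proof would fail for any $s_j$ with $|s_j(x)|^2\neq 1/N$.
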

\begin{proof}
We have that

\begin{align*}
(U^\dagger U)_{jk} = \sum_{l=1}^N \overline{U}_{lj} U_{lk} = \sum_{l=1}^N (\overline{\overline{s}_lw^{lj}})( \overline{s}_lw^{lk})= \sum_{l=1}^N ({s}_lw^{-lj})( \overline{s}_lw^{lk})
= \sum_{l=1}^N \overline{s_l}s_lw^{(k-j)l} 
.
\end{align*}
Since $\sum_{j=1}^{N} |s_j|^2=1$, we have $\sum_{j=1}^{N} s_j\overline{s}_j=1$, so
\begin{align*}
(U^\dagger U)_{jk} =\sum_{l=1}^N \overline{s_l}s_lw^{(k-j)l} = \delta_{k-j,0}=\delta_{j,k}.
\end{align*}Thus $U^\dagger U = I$ and similarly one can show $UU^\dagger= I$ and therefore $U$ is unitary.
\end{proof}

Note that the double index notation used in this context differs from Einstein's summation. In particular, the notation $U_{jk}$ denotes the element located in the $j^{th}$ row and $k^{th}$ column of the matrix $U$. Additionally, the expression $w^{jk}$ represents $w$ raised to the power of $j$ multiplied by $k$. Theorem~\ref{thm:unitary} imposes certain restrictions on the smooth function $s_j$, but there exist numerous smooth functions that satisfy it. Just as with Theorem~\ref{thm1}, the constant functions $s_j= w\frac{1}{\sqrt N}$ for every $j$, and $s_j= \frac{1}{\sqrt N}$ for every $j$ are both solutions to Theorem~\ref{thm:unitary}. In our quantum circuit, we will use $s_j= \frac{1}{\sqrt N}$, but depending on the task at hand, one could construct a more complex function $s_j$, as illustrated in Figure~\ref{img:smooth}. For instance, multiple sparse representation systems of Parseval frames were smoothly combined using the function depicted in Figure~\ref{img:smooth}, and additional research on related topics has also been pursued. \cite{EDB2021,bownik2015}.

Since each quantum gate for a single qubit is represented by a $2 \times 2$ matrix, we make a special remark on 2D unitary operators created by the above process. In 2D, the requirement that for each $\k \in \{0, 1,2, \dots, N-1\}$, 
	$$   
	\sum_{j=1}^N w^{\k j} \overline{s_j}(x) {s_{j+\k }}(x) = \delta_{\k,0} \,$$ 
 is satisfied by all 2D real functions $s_j$ with the property  $$\sum_{j=1}^{N} |s_j|^2=1.$$ 
This follows because the second root of unity is $w=e^{i \pi}=-1.$ In fact, for a real valued 2D function $s_1$, the matrix $U$  has the following form

 \[ U=\begin{bmatrix} 
    s_1w^0  &   s_2w^1  \\
   s_2w^0 &  s_1w^2  \end{bmatrix}  =
    \begin{bmatrix}s_1  &   -s_2  \\
   s_2 &  s_1  
    \end{bmatrix}.\] Then clearly

     \[ UU^\dagger=\begin{bmatrix} 
    s_1  &   -s_2 \\
   s_2 &  s_1 \end{bmatrix} 
    \begin{bmatrix}
    s_1  &   s_2  \\
   -s_2 &  s_1  
    \end{bmatrix}=\begin{bmatrix}
    {s_1}^2+{s_2}^2  &   s_1s_2-s_2s_1  \\
   s_2s_1-s_1s_2 &  {s_1}^2+{s_2}^2  
    \end{bmatrix}=\begin{bmatrix}
    1  &   0  \\
   0 &  1  
    \end{bmatrix}.\]

The cyclic permutation operator $T$ is the second tool we use to build quantum gates. To illustrate, consider a sequence of numbers where all elements are zero, except for one with a value of 1. By combining all possible cyclic permutations of this sequence, we can create a unitary operator. We will demonstrate this process of building unitary operators in the next section. The Quantum Cyclic Permutation Algorithm (QCPA), which we developed in Section~\ref{first:algo}, shares similarities with the Bernstein-Vazirani algorithm~\cite{Bernstein1997}. However, our approach was inspired by function analysis and sparse representation theory.

\section{Quantum Search Algorithm}\label{sec:algo}
Grover's search algorithm is a crucial algorithm in the field of quantum computing that allows for searching through an unsorted database at a much faster rate than classical algorithms~\cite{grover1996fast}. This algorithm provides a quadratic speedup over classical search methods, making it a vital tool for solving optimization and search problems across various fields, including cryptography, machine learning, and artificial intelligence. The algorithm is composed of three significant steps, namely superposition, phase flip, and amplitude amplification, achieved via the application of the Hadamard gate, Grover oracle, and diffusion operator, respectively, as shown in Figure~\ref{grover}. The inversion about the mean, the last step of the algorithm, is particularly useful for amplifying the probability amplitude of the marked state and has been applied to solve problems in other fields, including linear algebra~\cite{brassard2002,ambainis2012}.

\begin{figure}[h]
    \centering
    \includegraphics[scale=0.45]{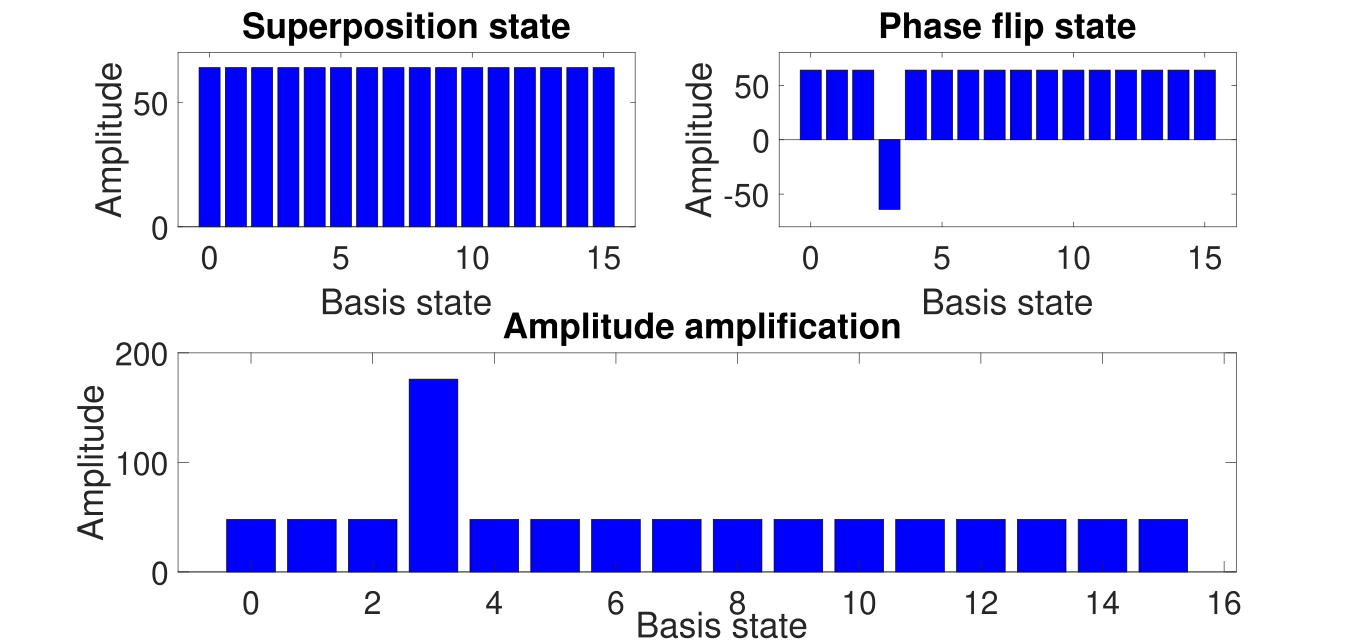}
    \caption{Three steps of Grover's Search algorithm: superposition, phase flip and amplitude amplification.}
    \label{grover}
\end{figure}  

This paper's primary purpose is not to directly compare the runtime complexity between our proposed algorithm and Grover's search algorithm. Instead, we aim to investigate the potential of constructing algorithms that outperform theoretically optimal algorithms under certain conditions. Quantum systems are subject to inherent problems such as noise, coherence, and measurement accuracy, which can affect the performance of quantum algorithms in practice~\cite{preskill2018quantum}. Furthermore, exploring the relationship between functions with higher-order derivatives may provide insights into solving quantum approximation problems, for instance, state estimations~\cite{lloyd2013quantum}.

\subsection{Quantum Cyclic Permutation Algorithm (QCPA)}\label{first:algo}
We now present a new quantum search algorithm  based on the cyclic-permutation oracle. 
 Suppose we have a sequence $x=(x_i)$ for $i=1,2,...,N$ such that
 
 \[ x_i=\begin{cases} 
      1 & i= s \\
      0 & \text{otherwise}.
   \end{cases}
\]
We construct a quantum circuit capable of finding the ``marked state" $s.$

Let $w=\exp(\frac{2\pi i}{N})$ be the $N^{th}$ root of unity and 

\[
U = \frac{1}{\sqrt{N}}\begin{bmatrix} 
    {w}^{0\cdot 1} & w^{1\cdot 1}  & \dots &  w^{(N-1)\cdot 1}& \\
    {w}^{0\cdot 2} & w^{ 1 \cdot 2}  & \dots &  w^{(N-1)\cdot 2}&  \\
    \vdots & \vdots  & \ddots \   &\vdots &\\
    {w}^{0\cdot (N-1)} & w^{1\cdot (N-1)}  & \dots &  w^{(N-1)\cdot (N-1)}& \\
    {w}^{0\cdot N} & w^{ 1 \cdot N}  & \dots &  w^{(N-1)\cdot N}& 
    \end{bmatrix}=\frac{1}{\sqrt{N}} \begin{bmatrix} 
    {w}^{0} & w^{1}  & \dots &  w^{N-1}& \\
    {w}^{0} & w^{ 2}  & \dots &  w^{N-2}&  \\
    \vdots & \vdots  & \ddots \   &\vdots &\\
    {w}^{0} & w^{N-1}  & \dots &  w^{1}& \\
    {w}^{0} & w^{0}  & \dots &  w^{0}& 
    \end{bmatrix}\]

The $N\times N$ unitary matrix $U$  is the $N$ dimensional version of the matrix constructed in Theorem~\ref{thm:unitary} with $s_j=\frac{1}{\sqrt{N}}$ and hence can act as a quantum gate. Additionally, for any scalar vector $v=[v_1\quad v_1\quad \cdot \cdots \quad v_1]^T$, it possesses the property: 

 \[ Uv=\begin{cases} 
     \sqrt{N}v_1 & i= N \\
      0 & \text{otherwise}.
   \end{cases}
\]

Furthermore, let us suppose that we have an oracle $F$ capable of shifting all quantum states by $s$ unit counter-clockwise. This oracle can be constructed using a cyclic permutation of the sequence $x$, where

\[F = \begin{bmatrix} 
    x_1 & x_2  \qquad & \dots &  x_N& \\
    x_2 & x_3\qquad   & \dots &  x_1   \\
    \vdots \qquad  & \vdots  & \ddots \   &\vdots &\\
    x_{N-1} & x_{N} \qquad   & \dots &  x_{N-2}&\\
    x_{N}  & x_1  & \qquad  \dots &  x_{N-1}& 
    \end{bmatrix}.\]

Given that the sequence $x$ contains only one non-zero element, which is $1$, the columns of the matrix $F$ are orthonormal, making $F$ unitary and therefore a quantum gate. The oracle $F$ is a permutation matrix and shares similarity with the oracle used in Grover's search. Grover's oracle flips the phase of the $s^{th}$ quantum state while $F$ cyclically shifts all states $s$ times. With the necessary setup complete, we can now outline our quantum algorithm, which involves three primary steps:

\begin{itemize}
\item \textbf{Superposition:} achieved by applying the Hadamard gate
\item \textbf{Collapse:} collapsing the probabilities to the last quantum state using the $U$ gate
\item \textbf{Shift:} shifting all quantum states counter-clockwise by $s$ units using the $F$ gate.
\end{itemize}

\begin{figure}
\hspace{25mm}
\Qcircuit @C=4em @R=3em {
   &\lstick{\ket{0}} &\gate{H} &\multigate{3}{U} &\qw &\multigate{3}{F} &\qw  &\meter \\
   &\lstick{\ket{0}} &\gate{H} &\ghost{U} &\qw &\ghost{F} &\qw  &\meter \\
   &\lstick{\ket{0}} &\gate{H} &\ghost{U} &\qw &\ghost{F} &\qw  &\meter \\
   &\lstick{\ket{0}} &\gate{H} &\ghost{U} &\qw &\ghost{F} &\qw  &\meter \\
}\\
\caption{Circuit representation for QCPA}
\label{img:circuit}
\end{figure}
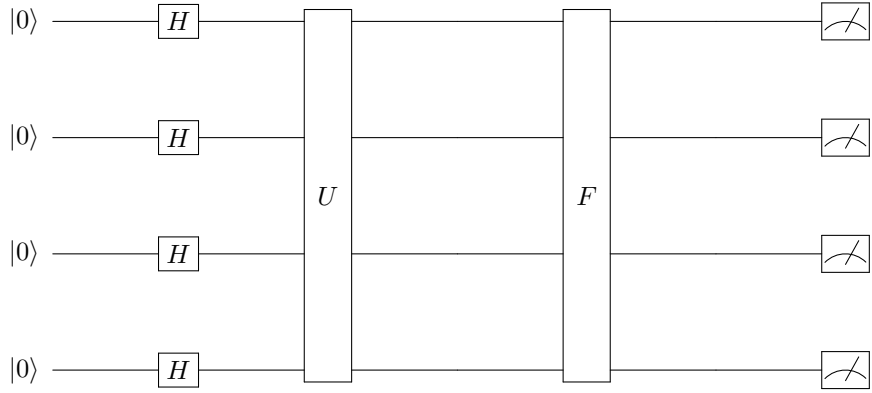

Finally, we can determine the location of the marked state $x$ by calculating $s=N+1-j$, where $j$ is the quantum state obtained after measurement. This is because the cyclic permutation operator shifts the location of quantum states by $s$ units.

In the following discussion, we will demonstrate the workings of our algorithm for a three-qubit system, where the state $\ket{101}$ (sixth location of the sequence)  is marked.

\begin{enumerate}
\item Apply Hadamard gates to $3$  qubits initialized to  $ \ket{000}$  to create a uniform superposition:
$$\ket{{\Psi}_{H}}=\frac{1}{\sqrt{8}}\big(\ket{000}+\ket{001}+\ket{010}+\ket{011}+\ket{100}+\ket{101}+\ket{110}+\ket{111}\big)$$

\item Collapse the probabilities of all quantum states except the last state by applying gate $U$ to obtain one state with a probability of one:
$$\ket{{\Psi}_{U}}=U\ket{{\Psi}_{H}}=\frac{1}{\sqrt{8}}\cdot \frac{8}{\sqrt{8}}\ket{111}=\ket{111}$$

\item Shift the remaining state $s$ unit counter-clockwise using oracle $F$, where $s$ is the location of the marked state:
$$\ket{{\Psi}_{F}}=F\ket{{\Psi}_{U}}=F\ket{111}=\ket{010}$$

\end{enumerate}

As a result of the measurement, the quantum state $\ket{010}$ is be obtained, corresponding to the third position in the sequence. Therefore, the $1$ in the initial sequence is located at position $s=N+1-3=8+1-3=6,$ which is the state $\ket{101}$.

\subsubsection{Understanding QCPA via matrix operations}

Suppose we have a sequence $x=0, 1, 0, 0$. Since $N=4$,  $w=\exp(\frac{2\pi i}{4})=\exp(\frac{\pi i}{2})=i.$
Then 
\[ U =\frac{1}{\sqrt{4}} \begin{bmatrix} 
    {i}^{0\cdot 1} & i^{1\cdot 1}  &   i^{2\cdot 1}&  i^{3\cdot 1}&  \\
    {i}^{0\cdot 2} & i^{1\cdot 2}  &   i^{2\cdot 2}&  w^{3\cdot 2}&  \\
    {i}^{0\cdot 3} & i^{1\cdot 3}  &   i^{2\cdot 3}&  w^{3\cdot 3}& \\
    {i}^{0\cdot 4} & i^{1\cdot 4}  &   i^{2\cdot 4}&  w^{3\cdot 4}& 
    \end{bmatrix} = \frac{1}{2} \begin{bmatrix} 
    {i}^{0} & i^{1}  &   i^{2}&  i^{3}& \\
   {i}^{0} & i^{2}  &   i^{0}&  i^{2}&  \\
    {i}^{0} & i^{3}  &   i^{2}&  i^{1}& \\
    {i}^{0} & i^{0}  &   i^{0}&  i^{0}& 
    \end{bmatrix}=\frac{1}{2} \begin{bmatrix} 
    1 & i  &  -1&  -i& \\
   1 & -1  &   1&  -1&  \\
    1 & -i  &   -1&  i& \\
    1 & 1  &   1&  1& 
    \end{bmatrix}\]
    
    and
    \[ F=\begin{bmatrix} 
    0  &   1&  0& 0&\\
   1 & 0  &   0&  0&  \\
    0 & 0 &   0&  1& \\
    0 & 0  &   1&  0& 
    \end{bmatrix}\]

     \begin{figure}[h]
    \centering
    \includegraphics[scale=0.40]{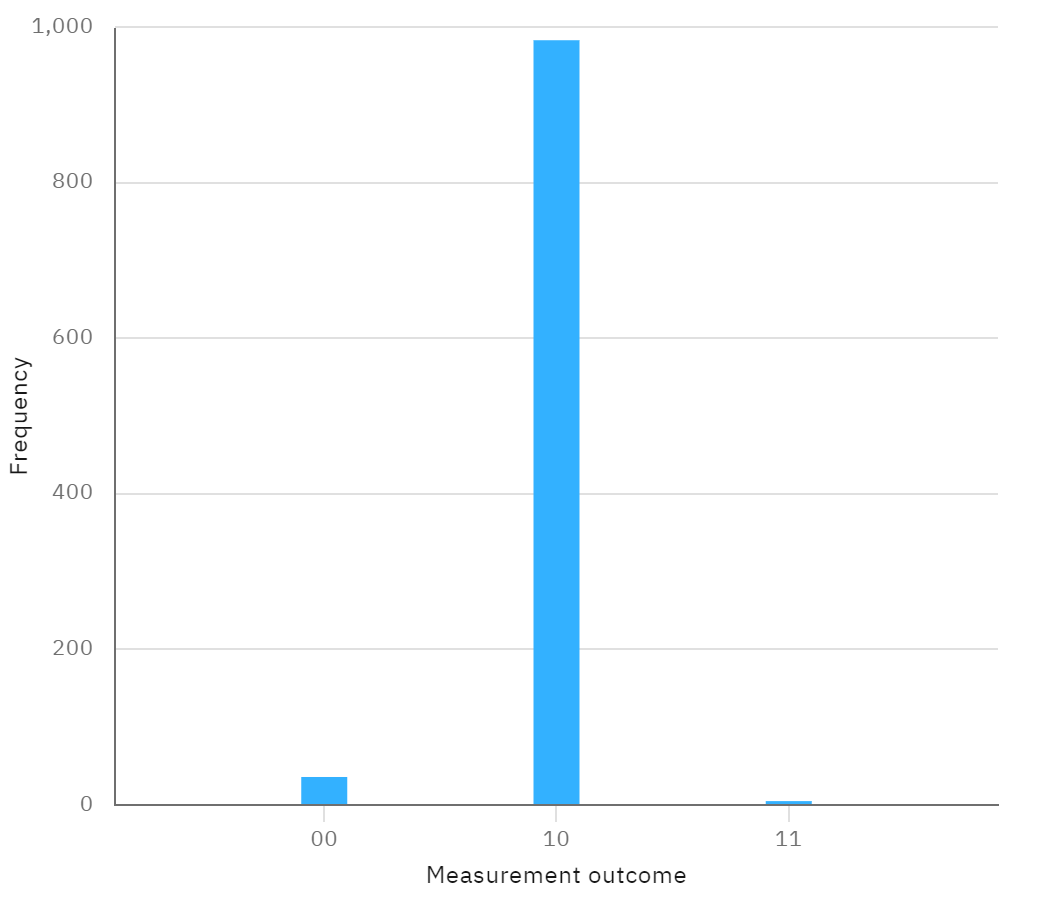}
    \caption{Histogram of measurement outcomes on IBM hardware for finding the location 1 in a sequence of four numbers. The state $\ket{10}$ is the expected state predicted by QCPA.}
    \label{2qubidhardware}
\end{figure}  
    
Now, if we initialize 2 qubits to $0$ states and feed them through the Hadamard gate, we get a superposition state  $q=[ q_1 \quad q_2 \quad q_3 \quad q_4]^T,$ where $|q_i|=\frac{1}{2}.$
For simplicity, we write $q=[ 0.5 \quad 0.5 \quad 0.5 \quad 0.5]^T.$

\[FUq=\frac{1}{2} \begin{bmatrix} 
   0  &   1&  0& 0&\\
   1 & 0  &   0&  0&  \\
    0 & 0 &   0&  1& \\
    0 & 0  &   1&  0& 
    \end{bmatrix}\begin{bmatrix} 
    1 & i  &  -1&  -i& \\
   1 & -1  &   1&  -1&  \\
    1 & -i  &   -1&  i& \\
    1 & 1  &   1&  1& 
    \end{bmatrix}
    \begin{bmatrix} 
    0.5 \\
   0.5  \\
    0.5  \\
    0.5  
    \end{bmatrix}\]
    
    \[\qquad =\frac{1}{2}\begin{bmatrix} 
  0  &   1&  0& 0&\\
   1 & 0  &   0&  0&  \\
    0 & 0 &   0&  1& \\
    0 & 0  &   1&  0& 
    \end{bmatrix} \begin{bmatrix} 
    0 \\
   0  \\
    0  \\
    2  
    \end{bmatrix}\]

   \[= \begin{bmatrix} 
    0 \\
   0  \\
    1  \\
    0  
    \end{bmatrix}\]

This implies that the third quantum state will appear with a probability of $1$ if a measurement is performed. Figure~\ref{2qubidhardware} presents an experimental verification of the algorithm's accuracy in determining the third quantum state (i.e., \ket{10}) using IBM hardware. As anticipated, the results demonstrate that the algorithm accurately predicted the correct quantum state. Once a measurement is performed, the marked location is given by  $s=N+1-j=5-3=2.$

The overall complexity of constructing the permutation quantum gate using a sequence of the length of $N$ is $O(N^2).$ This is because the process of creating each row of the gate requires $O(N)$ operations, and since there are $N$ rows, the total complexity is $O(N^2).$ Once the gate is built, we only require one step operation for detecting the marked state, so the complexity of the entire algorithm is $O(N^2).$ Despite Grover's method having lower computational complexity than QCPA and QUSA, which is presented in Section~\ref{sec:second}, several advantages are associated with these algorithms. For instance, both QCPA and QUSA demonstrate lower noise susceptibility, as seen in Section~\ref{noise:sec}. Additionally, these methods contribute to advancing our theoretical comprehension of search algorithms and offer a pathway for constructing quantum circuits utilizing smooth operators.

\subsection{Quantum Unity Sum Algorithm (QUSA)}\label{sec:second}
The Quantum Unity Sum Algorithm (QUSA) is the second novel search algorithm presented in this work. It bears similarities to the QCPA introduced in Section~\ref{first:algo}, but it also features two distinctive elements. Firstly, QUSA employs a function $f$, similar to the one used in Grover's search, which outputs $0$ for the marked state and $1$ for all other states. Secondly, instead of explicitly using the permutation oracle, QUSA adjusts the unitary operator $U$ by leveraging a well-known property of the $N^{th}$ root of unity. This property allows the probability to be shifted to the correct quantum state. If
\[
U = \frac{1}{\sqrt{N}} \begin{bmatrix} 
    {w}^{0} & w^{1}  & \dots &  w^{N-1}& \\
    {w}^{0} & w^{ 2}  & \dots &  w^{N-2}&  \\
    \vdots & \vdots  & \ddots \   &\vdots &\\
    {w}^{0} & w^{N-1}  & \dots &  w^{1}& \\
    {w}^{0} & w^{0}  & \dots &  w^{0}& 
    \end{bmatrix}\]

then each element of the matrix $U$ in the $k$-th row and $l$-th column can be represented as 

$$U_{kl} =\frac{1}{\sqrt{N}}w^{k(l-1)}$$

\begin{figure}[ht]
\hspace{35mm}
\Qcircuit @C=4em @R=3em {
   &\lstick{\ket{0}} &\gate{H} &\multigate{3}{\tilde{U}}    &\qw  &\meter \\
   &\lstick{\ket{0}} &\gate{H} &\ghost{U} &\qw    &\meter \\
   &\lstick{\ket{0}} &\gate{H} &\ghost{U} &\qw   &\meter \\
   &\lstick{\ket{0}} &\gate{H} &\ghost{U} &\qw    &\meter \\
}\\

  \caption{Circuit representation for QUSA}
    \label{sec:circ }
\end{figure}
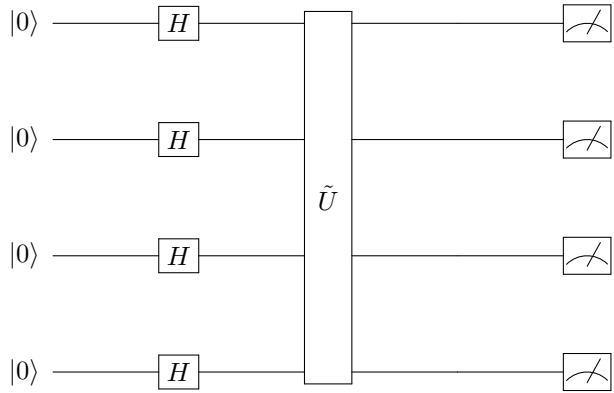
We construct a new oracle $\Tilde{U}$ capable of locating the marked state. For $k=1,2,....,N-1$ let

$$\Tilde{U}_{kl} =({U}_{kl})^{(f(x_k))}$$

and $S=\frac{1}{\sqrt{N}}\bigg(-\sum_{k=1}^{N-1}{U}_{kN}\bigg)^{-1}.$ The last row of $\Tilde{U}$ is given by 

$$\Tilde{U}_{Nl} = \frac{1}{\sqrt{N}}{S^{l-1}}.$$

 The new unitary matrix $\Tilde{U}$ has the property that for any scalar vector $v=[v_1\quad  v_1\quad  \cdot \cdots \quad v_1]^T,$

 \[ \Tilde{U}v=\begin{cases} 
     \sqrt{N}v_1 & i= s \\
      0 & \text{otherwise}.
   \end{cases}
\]
where $s$ is the marked state.

This approach involves a quantum circuit with the following sequence of steps:
\begin{itemize}
    \item Set all quantum states to the zero state
    \item Use the Hadamard gate to create superposition
    \item Apply the gate $\Tilde{U}$
    \item Perform a measurement 
   
\end{itemize}

This completes the construction of our new circuit, and Figure~\ref{algo23qubitloc4} displays the measurement results obtained by running this circuit on an IBM quantum computer to locate the marked state of a sequence. This algorithm exploits a key property of the $N^{th}$ root of unity, namely, that the sum of the roots of unity is zero. The procedure involves generating $N-1$ rows of the unitary operator $\Tilde{U}$ by exponentiating the $i^{th}$ row of $U$ by $f(x_i)$. Since $f(x_i)$ equals 1 for all but the marked state $s$, the rows of $\Tilde{U}$ corresponding to $x_i$ other than $s$ remain identical to the corresponding rows of $U$, while the row corresponding to $s$ is set to all ones. To ensure that $\Tilde{U}$ is isometric, we construct its last row using the $N-1$ previously constructed rows.

\begin{figure}[ht]
    \centering
    \includegraphics[scale=0.35]{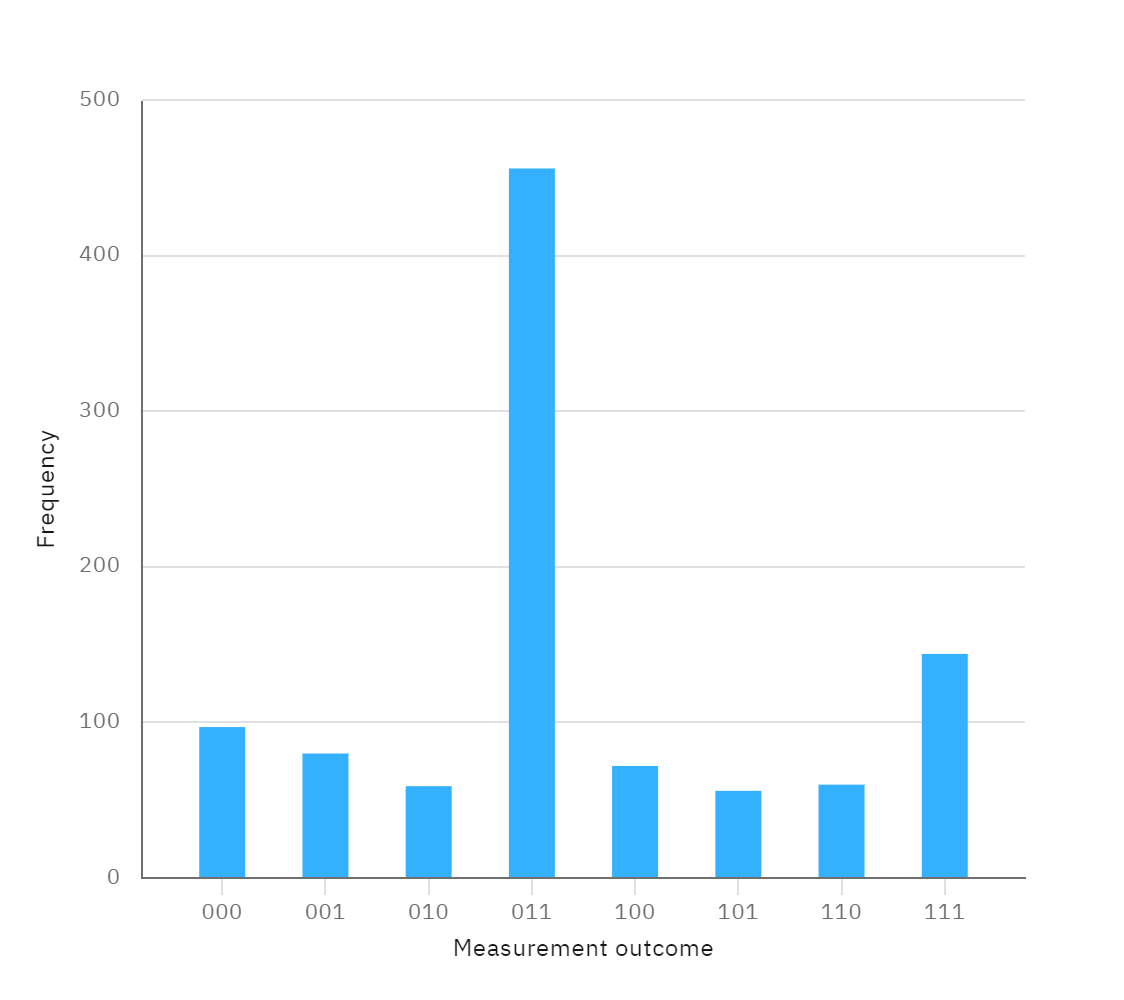}
    \caption{Histogram of measurement outcomes on IBM hardware for finding the location 1 in a sequence of eight numbers. The state $011$ is the expected state predicted by QUSA described in Section~\ref{sec:second}.}
    \label{algo23qubitloc4}
\end{figure}  
\section{Noise Analysis}\label{noise:sec}
Quantum error correction and related topics have been extensively studied in quantum computing, given the susceptibility of quantum hardware and measurement instruments to errors ~\cite{devitt2013quantum,lidar2013quantum}. Error propagation can cause poor performance in algorithms requiring more gate operations or larger circuits. Here, we investigate how our algorithms perform when noise is added to the system. We demonstrate that our one-step search algorithm is less susceptible to noise than Grover's search algorithm, which requires multiple iterations. For this purpose, we performed two sets of experiments to verify this claim: noise-added simulations and actual quantum hardware testing. All the quantum circuits we used in this study were built and transpiled using the Qiskit package in Python. We used the universal gates CX and U3 in the circuit construction process. To simulate a noisy system, we added depolarizing errors to single-qubit U3 and CX gates, which can model the effect of imperfect control or environmental interaction. The depolarizing error was applied with the probability of 0.001 gate error for both types of gates. We applied an equal number of noisy gates to both our algorithms and Grover's search algorithm to examine the impact of error propagation on their performance and to ensure a fair comparison of the three methods. We applied noise to the first n-gates and let the noise propagate. Since the Grover search algorithm is iterative, it necessitates more gate operations than our approach as the number of states grows. Figure~\ref{gatesbarplot} presents a bar graph depicting the number of gates required to execute each algorithm via the transpiler. As the sequence order we are seeking rises, the Grover technique demands considerably more gate operations than the other two. As expected, Grover's search algorithm predicted the right quantum state but with much less accuracy than our algorithm. Table~\ref{tab:quantum_perf} presents the simulation results, while Figure~\ref{gatesbarplot} shows that the QUSA requires fewer gates than QCPA to run, and consequently, the QUSA is less susceptible to noise than QCPA.

\begin{table}[htbp]
  \centering
  \caption{Comparison of Accuracy }
  \label{tab:quantum_perf}
  \begin{tabular}{|c|ccc|ccc|ccc|}\hline {Qubits}
    & \multicolumn{3}{c|}{Grover} & \multicolumn{3}{c|}{QCPA} & \multicolumn{3}{c|}{QUSA} \\
    \cline{2-10}\rule{0pt}{4ex}  
    & \multicolumn{3}{c|}{ Number of Noisy Gates} & \multicolumn{3}{c|}{Number of Noisy Gates} & \multicolumn{3}{c|}{Number of Noisy Gates} \\
        & 2 & 4 & 6 & 2 & 4 & 6 & 2 & 4 & 6  \\
    \hline \rule{0pt}{4ex}  

    3 & 90.62 & 90.14 & 88.09 & 98.54 &  97.85 &  96.97 & 98.83 & 97.95 & 97.46  \\
 
      4    & 76.66 & 61.72 & 58.50 & 89.84 & 82.42 & 81.35 & 94.82 & 90.04 & 88.96  \\
    
       5   & 30.47 & 11.33 & 6.93 & 66.41 & 42.97 & 40.82 &  78.52 & 66.80 & 58.40\\

    \hline
    
    \hline
  \end{tabular}
\end{table}

\begin{figure}[ht]
    \centering
    \includegraphics[scale=0.45]{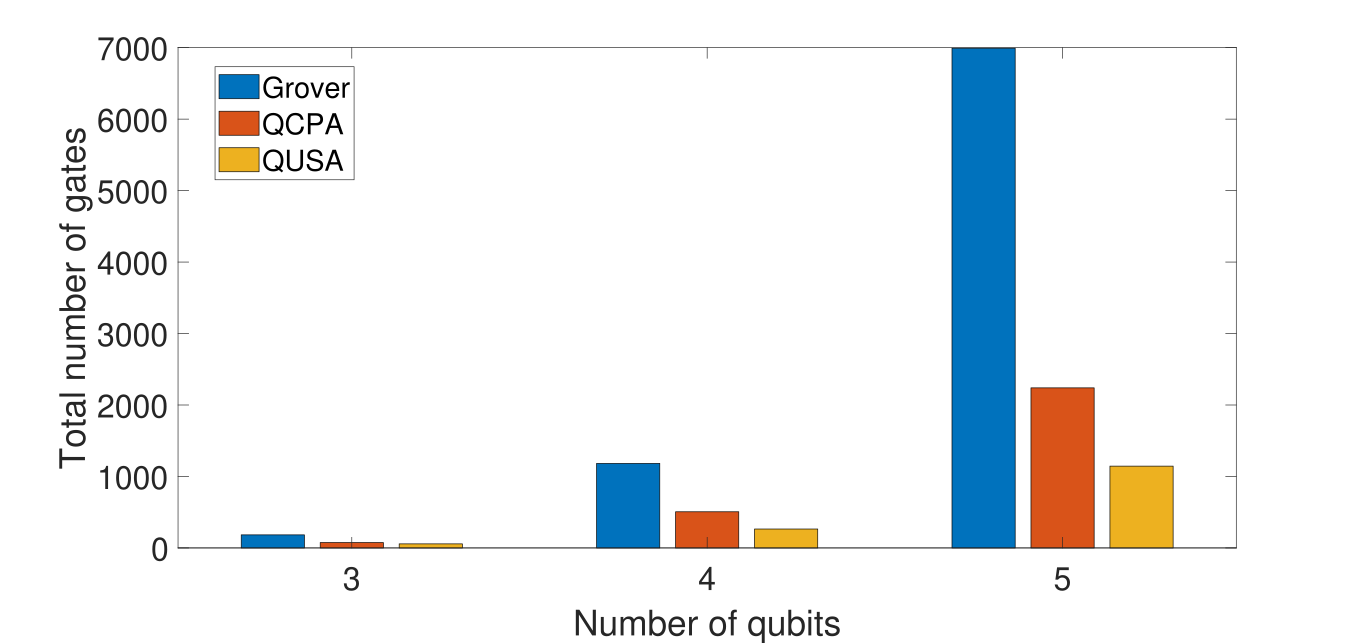}
    \caption{Number of gates in transpiled circuits for each of the three methods.}
    \label{gatesbarplot}
\end{figure}  

To further validate our assumption, we tested both algorithms on actual quantum hardware, which naturally introduces errors due to measurement and other quantum system issues. Our algorithm outperformed Grover's search, just as in the noisy simulation case. The measurement results for a two-qubit system consisting of four states, to predict the marked state of 10 using both algorithms and Grover's method, are presented in Table~\ref{Table:Errors}. Grover's method had a success rate of $94.04\%$, while our algorithm performed slightly better with a success rate of $96.19\%$. The QUSA also performed well, with a success rate of $95.60\%$. These findings support our assumption that one-step search is less error-prone compared to Grover's search, which can exacerbate errors at each iteration. This is particularly relevant since quantum systems are highly susceptible to measurement and other errors.

\begin{table}[htbp]
\caption{Measurement results using Grover's method, QCPA and QUSA for a two-qubit system (4 states) in predicting the marked state $\ket{10}$ on IBM quantum computer (ibm\_jakarta). The simulation was conducted $1024$ times for each method.  }
\centering

{\begin{tabular}{c|cccc|c} \hline \hline{Algorithm  }
\multirow{3}{*} & \multicolumn{4}{c|}{Quantum State} & \multicolumn{1}{c}{Performance(\%)}   \\
                       & 00 & 01 & 10 & 11 &  \\ \hline \rule{0pt}{4ex}  
Grover             &     50     &   3    &  963  &     8  & 94.04      \\ 
QCPA                &     27     &    0   &   985  &      12 &  96.19    \\ 
QUSA                &     25    &    10   &   979  &      10 &  95.60     \\\hline \hline 
\end{tabular}}

\label{Table:Errors}
\end{table}

\section{Conclusion}\label{sec:conclusion}

Previous related works in shearlet and wavelets  used permutation operators to achieve symmetrization, ensuring that the functions remained infinitely differentiable at the intersection of N-dimensional partitions~\cite{Pahari2020Dis,B2019,Auscher}. However, we used permutation operators for a different purpose: to shift quantum states. Furthermore, smooth projection operators and shearlet theory utilized the sum of the roots of unity equals zero to achieve perfect signal reconstruction by carefully balancing the contributions from each partition of the projected
signal~\cite{EDB2021}. In contrast, our approach involves utilizing this same property of the roots of unity to construct a unitary operator
that shifts the probabilities toward the marked quantum state. This demonstrates that concepts derived from wavelet and shearlet theories can be adapted to address quantum computing problems. Additionally, our algorithm demonstrated better accuracy than the Grover search algorithm in predicting the correct quantum state. This enhanced performance may be attributed to the fact that our one-step search algorithms do not rely on the Grover oracle explicitly and are non-iterative, unlike Grover's algorithm.

We introduced two novel search algorithms that rely on smooth operators. Our primary objective was to establish a link between infinitely differentiable operators and quantum computing, which could lead to the development of new quantum algorithms capable of leveraging the excellent approximation properties of derivatives. Although we employed a single smooth function $ s_j=\frac{1}{N}$ to construct both algorithms, other smooth functions could also be utilized to construct quantum operators that address the given problems. Additional research could explore ways to decrease the runtime complexity of both methods and develop novel quantum algorithms that utilize smooth functions. These investigations need not be limited to search algorithms but can encompass a broader range of applications within quantum computing.  This approach has the potential to expand our theoretical understanding of quantum computing significantly.

\keywords{Quantum computing, Smooth operators, Quantum search}

\bibliography{main} 
\bibliographystyle{plain} 

\end{document}